\title{Commitment, Conflict, and Status Quo in Bargaining}
\author{Harry PEI\footnote{Department of Economics, Northwestern University. I thank Nina Bobkova, Stephen Morris, and Bruno Strulovici for helpful comments. I thank NSF Grant SES-2337566 for financial support.}}
\date{\today}
\begin{document}
\numberwithin{equation}{section}

\maketitle
\noindent \textbf{Abstract:} Each period, two players bargain over a unit of surplus. Each player chooses between remaining flexible and committing to a take-it-or-leave-it offer at a cost. If players' committed demands are incompatible, then the current-period surplus is destroyed in the conflict. When both players are flexible, the surplus is split according to the \textit{status quo}, which is the division in the last period where there was no conflict. We show that when players are patient and the cost of commitment is small, there exist a class of symmetric Markov Perfect equilibria that are asymptotically efficient and renegotiation proof, in which players commit to fair demands in almost all periods.\\

\noindent \textbf{Keywords:} bargaining, commitment, conflict, fair division, efficiency. 

\newtheorem{Proposition}{\hskip\parindent\bf{Proposition}}
\newtheorem{Theorem}{\hskip\parindent\bf{Theorem}}
\newtheorem{Lemma}{\hskip\parindent\bf{Lemma}}
\newtheorem{Corollary}{\hskip\parindent\bf{Corollary}}
\newtheorem*{Definition}{\hskip\parindent\bf{Definition}}
\newtheorem*{Assumption}{\hskip\parindent\bf{Regular Records}}
\newtheorem*{Condition}{\hskip\parindent\bf{Condition}}
\newtheorem{Claim}{\hskip\parindent\bf{Claim}}
\newtheorem*{Assumption1}{\hskip\parindent\bf{Assumption 1'}}

\begin{spacing}{1.5}

\section{Introduction}\label{sec1}
In many bargaining situations, ranging from legislative bargaining to bargaining between business partners,  the allocation of surplus is sometimes shaped by past agreements. For example, when a party makes a compromise and accepts an unfair deal, it can establish a precedent for others to exploit in the future, pressuring them into accepting further unfavorable agreements. Aware of these consequences, forward-looking players  might be unwilling to make concessions at the bargaining table, even at the risk of bargaining breakdowns and lowering the overall surplus.

We analyze an infinite-horizon bargaining game where (i) players can make costly commitments regarding the share of the surplus they demand 
and (ii) past agreements between players will become the status quo for future negotiations. In our model, players can benefit from altering the status quo in their favor and such benefits can be arbitrarily large as players become arbitrarily patient. Nevertheless, we show that as the cost of commitment and the discounting vanish to zero, there exist equilibria where both players commit to receiving a half of the surplus in almost all periods. Our equilibria are renegotiation proof and Markov perfect, which rule out punishments that are Pareto inefficient as well as those that are contingent on payoff-irrelevant information.

In every period of our model, players bargain over a unit of divisible good by choosing whether to \textit{remain flexible} or to \textit{commit to some share} at a small cost. In practice, such commitments can take the form of burning bridges or making public announcements, which makes it very costly (if not impossible) for the parties who make those commitments to back down.\footnote{For example, if a union leader has repeatedly announced that they will step down unless they can increase the workers' salaries by $50\%$, then it would be very hard for them to accept a wage increase that is much less than $50\%$.} 
Each player's utility from consuming the good is a strictly increasing and strictly concave function of the share they receive, which means that in the efficient allocation, each player receives half a unit. 
If one player is committed and the other is flexible or committed to something compatible (e.g., both players commit to $1/2$), then the committed player receives the share they committed to and the rest belongs to the other player. Both players receive nothing in periods where they make incompatible commitments. If both of them are flexible, then the surplus is split according to the \textit{status quo}. The status quo in the initial period is exogenously given, but then it evolves endogenously over time: It coincides with the division of surplus in the last period where the surplus was split.

As a benchmark, in every Nash equilibrium of the one-shot bargaining game and in every Markov Perfect equilibrium of the repeated bargaining game where the status quo evolves exogenously, players will never commit to anything other than the entire surplus. This is because (i) committing to anything weakly less than the status quo is strictly dominated by being flexible, and (ii) after a player knows that their opponent will never commit to anything less than their share under the status quo, they will have no incentive to commit to anything strictly less than $1$ since it will be strictly dominated by some mixture between committing to $1$ and remaining flexible. As a result, players' payoffs are bounded below first best since the good is almost never equally shared.

In the repeated bargaining game with an endogenously evolving status quo, there exist Markov Perfect equilibria in which sufficiently patient players' equilibrium payoffs are close to first best as their cost of commitment vanishes. This conclusion may not be surprising since players can deter each other from demanding the entire surplus by coordinating on an inefficient continuation equilibrium once a player's share unless the status quo division is fair.\footnote{Richter (2014) uses a different dynamic model to study dynamic legislative bargaining, in which he constructs an equilibrium that achieves fair division in the long run. In the equilibrium he constructed, players will coordinate on Pareto inefficient punishments when the status quo does not give all players an equal share.} Such inefficient continuation equilibria exist. One example is that once a player's status quo share equals $1$, both players mix between committing to $1$ and remaining flexible, with the probabilities of committing to $1$ converging to $1$ as players are patient and their cost of commitment vanishes. If that is the case, then each player's continuation value is close to $0$ when their status quo allocation reaches $1$, which discourages them from committing to the entire surplus when the status quo allocation is fair.

Our main result shows that there exist a class of Markov Perfect equilibria in which as players become arbitrarily patient and their cost of commitment vanishes, (i) both players will commit to demand their fair share in almost all periods, which implies that their equilibrium payoffs are close to the socially optimum, and (ii) at every history (including those that are never reached on the equilibrium path), their stage-game payoffs and their continuation values are close to the Pareto frontier. This implies that the socially efficient allocation can be achieved in MPEs that are \textit{asymptotically renegotiation proof}, in the sense that at every history, players' gains from renegotiations are negligible in terms of their stage-game payoffs and continuation values. Our equilibria also have the realistic feature that once a player's status quo allocation is close to the entire surplus, then his continuation value is also close to that from receiving the entire surplus in all periods.

The key to our construction is how to provide incentives for players to commit to demand their fair share when the current status quo allocation is fair. If both players commit to their fair share, then one player will have an incentive to deviate by remaining flexible, which saves their commitment cost. When one player is flexible, the other player will have an incentive to demand the entire surplus since it will lead to a high continuation value. Hence, in order to provide the right incentives, both players need to randomize between demanding the entire surplus, demanding their fair share, and remaining flexible. Interestingly, the equilibrium probability with which each player demanding their fair share converges to one as they become arbitrarily patient and their cost of commitment vanishes.

\paragraph{Related Literature:} Our paper contributes to the literature on repeated bargaining, in which two players bargain over some surplus in every period and the game does not end after an offer is accepted. Compared to earlier works that focus on the role of incomplete information (e.g., Schmidt 1993, Lee and Liu 2013) or imperfect observability (e.g., Wolitzky 2023), we abstract away from these concerns and instead focus on the roles of costly commitments and the endogenously evolving status quo, which  to the best of our knowledge, is novel in this literature.

In terms of modeling commitment, we take Schelling (1960)'s perspective that views bargaining as a struggle between players to convince others that they are committed. Schelling's idea was first formalized by Crawford (1982), followed by Ellingsen and Miettinen (2008), Dutta (2024), and  Miettinen and Vanberg (2025).\footnote{These works differ from the reputational bargaining model of Abreu and Gul (2000) where each player has private information regarding whether they are committed and can build a reputation for being committed.} Compared to those works that focus on static settings, we examine a dynamic setting where the current division of surplus affects the future status quo. The  endogenously evolving status quo  and the possibility of conflict motivate players to make moderate demands, which leads to equilibria that are both socially efficient and renegotiation proof.

Despite many modeling differences (such as players making offers simultaneously as well as the possibility of conflicts and surplus destruction), one feature of our model that the current-period surplus division determines the status quo in future periods is reminiscent of the literature on dynamic legislative bargaining pioneered by Epple and Riordan (1987) and Baron (1996). Epple and Riordan (1987), Kalandrakis (2004), Bowen and Zahran (2012), Duggan and Kalandrakis (2012), Richter (2014), and Anesi and Seidmann (2015) study models where in each period, a group of players decide how to divide a unit of surplus, a player is randomly selected to make a proposal, and the previous-period division (i.e., the status quo) will be enacted once the proposal is rejected. In contrast, the entire surplus is destroyed once players make conflicting commitments and the status quo is enacted only when both players are flexible. 
Moreover, our efficiency result focuses on  renegotiation proof equilibria, which rule out Pareto inefficient punishments at off-path information sets. In contrast, the efficient equilibria constructed in Epple and Riordan (1987) and Richter (2014) hinge on Pareto inefficient punishments.\footnote{Epple and Riordan (1987) use the optimal penal code in Abreu (1988), which requires inefficient punishments at off-path histories. Richter (2014) constructs an MPE where fair division is achieved asymptotically in which some of the surplus is wasted at certain status quo. Other papers in this literature show that equal division cannot be achieved 
when players are sufficiently patient due to their incentive constraints, such as Bowen and Zahran (2012). 
Kalandrakis (2004) constructs equilibria in which each player demands the entire surplus when they can make offers.}

\section{Model}\label{sec2}
Time is discrete, indexed by $t=0,1,2,...$. There are two players $i \in \{1,2\}$ who discount future payoffs by the same factor $\delta \in (0,1)$. Each period, they bargain over a unit of divisible and perishable good
by choosing whether to remain \textit{flexible} (denote this action by $f$) or to \textit{commit} to a demand $s \in [1/2,1]$ at cost $c>0$. We use $s_{i,t}$ to denote player $i$'s demand in period $t$ if they commit to something. Hence, each player chooses his action from the set $S \equiv \{f\} \bigcup [1/2,1]$. 

Let $u: [0,1] \rightarrow [0,1]$ denote the mapping from a player's share of the good to their utility,
which is strictly increasing and strictly concave with $u(0)=0$ and $u(1)=1$.\footnote{Our main result, Theorem \ref{Theorem1}, holds for all strictly increasing and weakly concave $u$. Nevertheless, when $u$ is linear, asymptotic efficiency can also be attained in the one-shot game.}
Let $\mathcal{U} \equiv \{(u(\alpha),u(1-\alpha))\}_{\alpha \in [0,1]}$ denote the Pareto frontier.
The allocation that maximizes utilitarian welfare is one where each player receives $1/2$ unit, from which the sum of their payoffs is 
$2u(1/2)$.

Players' stage-game payoffs in period $t \in \mathbb{N}$ depend both on their actions and on that period's \textit{status quo}, denoted by $(\alpha_{1,t},\alpha_{2,t}) \in [0,1]^2$ with $\alpha_{1,t}+\alpha_{2,t}=1$. If both players are committed and their commitments are incompatible, i.e., $s_{1,t}+s_{2,t}>1$, then the good is destroyed in the conflict and both players receive stage-game payoff $-c$. If both players are committed and their commitments are compatible, i.e., $s_{1,t}=s_{2,t}=1/2$, then each player receives $1/2$ unit of the good and each player's payoff is $u(1/2)-c$. If one player $i$ commits to $s_{i,t}$ and the other is flexible, then the committed player's stage-game payoff is $u(s_{i,t})-c$ and the flexible player's stage-game payoff is $u(1-s_{i,t})$. If both players are flexible, then player $i \in \{1,2\}$'s stage-game payoff is $u(\alpha_{i,t})$.

Our key modeling innovation is that the current division of surplus affects the future outcomes by affecting the status quo in the next period, which we interpret as setting a \textit{precedent}.

The status quo in period $0$ is $(\alpha_0,1-\alpha_0)$ for some exogenously given $\alpha_0 \in (0,1)$. 
For every $t \geq 1$, the status quo in period $t$, denoted by $(\alpha_{1,t},\alpha_{2,t})$,
(i) coincides with the status quo in period $t-1$ when the good was not successfully divided in period $t-1$, that is, when both players were committed in period $t-1$ and chose $s_{1,t-1}+s_{2,t-1}>1$, and (ii) coincides with the division in period $t-1$ when the good was successfully divided in period $t-1$. This suggests that
$(\alpha_{1,t},\alpha_{2,t})=(1/2,1/2)$ if $s_{1,t-1}=s_{2,t-1}=1/2$;
$(\alpha_{1,t},\alpha_{2,t})=(s_{1,t-1},1-s_{1,t-1})$ if player $1$ was committed and player $2$ was flexible; 
$(\alpha_{1,t},\alpha_{2,t})=(1-s_{2,t-1},s_{2,t-1})$ if player $1$ was flexible and player $2$ was committed; 
and $(\alpha_{1,t},\alpha_{2,t})=(\alpha_{1,t-1},\alpha_{2,t-1})$ if both players were flexible.

Each player $i$ maximizes their discounted average payoff $\sum_{t=0}^{+\infty} (1-\delta)\delta^t v_{i,t}$, where $v_{i,t}$ stands for player $i$'s stage-game payoff in period $t$.

Players can observe the current status quo (i.e., the \textit{state}) before choosing their actions.\footnote{Even when players can observe their opponents' actions in previous periods, they have no incentive to condition their behavior on such information given that their opponent does not do that, which is indeed the case in all MPEs.}  A \textit{Markov strategy} for player $i$ is $\sigma_i : [0,1] \rightarrow \Delta(S)$ where $\sigma_i (\alpha_i) \in \Delta(S)$ is his stage-game action when his share under the current status quo  is $\alpha_i$. A \textit{Markov Perfect equilibrium} (or MPE) is a Markov strategy profile $(\sigma_1,\sigma_2)$ such that for every $i \in \{1,2\}$, $\sigma_i$ best replies to $\sigma_{-i}$ starting from any status quo $(\alpha_i,\alpha_{-i})$. An MPE is \textit{symmetric} if both players use the same mapping, i.e., $\sigma_1=\sigma_2$.

\paragraph{Benchmark:} Consider a benchmark setting where the status quo (the division of surplus when both players are flexible) evolves according to an exogenous process. Proposition \ref{Prop1} shows that in every MPE, players will never commit to any demand other than $1$ and that the payoffs players receive in the resulting equilibria are bounded away from the socially efficient payoff. 
\begin{Proposition}\label{Prop1}
If the evolution of state variable $(\alpha_{1,t},\alpha_{2,t})_{t \in \mathbb{N}}$ is independent of players' actions, then in every MPE, no player will commit to any demand that is strictly less than $1$, and 
the sum of players' discounted average payoffs is no more than $1$ in any MPE under any $c>0$. 
\end{Proposition}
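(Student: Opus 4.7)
My plan is to prove both parts of Proposition \ref{Prop1} by first reducing to a family of static stage games. Since $(\alpha_{1,t},\alpha_{2,t})_{t \in \mathbb{N}}$ evolves exogenously, the continuation value from period $t+1$ onward is a deterministic function of the period-$t$ state alone, independent of period-$t$ actions. Hence any MPE best reply at state $\alpha=(\alpha_i,1-\alpha_i)$ reduces to one-period payoff maximization, so the pair $(\sigma_1(\alpha_1),\sigma_2(1-\alpha_1))$ must form a Nash equilibrium of the stage game parameterized by $\alpha$ for every $\alpha_1 \in [0,1]$.

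For the first conclusion, I would run a two-round iterated strict-dominance argument on the stage game, as outlined in the introduction. In round one, committing to any $s \leq \alpha_i$ is strictly dominated by $f$: against a flexible opponent, $u(s)-c<u(\alpha_i)$; against any opponent commitment $s_{-i}$, the committed payoff is either $-c$ (in the incompatible case) or $u(1/2)-c$ (in the knife-edge compatible case $s=s_{-i}=1/2$), both strictly less than the corresponding flexible payoffs $u(1-s_{-i})\geq 0$ and $u(1/2)$. After round one, the opponent's commitments satisfy $s_{-i}>1-\alpha_i$. In round two, I fix any candidate $s \in (\alpha_i,1)\cap[1/2,1)$ and observe that $s+s_{-i}>\alpha_i+(1-\alpha_i)=1$, so every opposing commitment is incompatible with $s$. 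Letting $q_f$ denote the probability the opponent is flexible, player $i$'s payoff from $s$ equals $q_f u(s)-c$, while committing to $1$ yields $q_f-c$. Since $u(s)<u(1)=1$, committing to $1$ strictly dominates $s$ whenever $q_f>0$; when $q_f=0$, remaining flexible yields at least $\mathbb{E}[u(1-s_{-i})]\geq 0 > -c$, so $f$ strictly dominates $s$. Either way no $s \in [1/2,1)$ survives, yielding the first conclusion.

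For the payoff bound, I would use the first conclusion to restrict each player's support to $\{f,1\}$ and then classify the residual stage equilibria. The pure profiles $(1,f)$ and $(f,1)$ generate stage-sum $1-c$; the profile $(1,1)$ fails to be an equilibrium because a unilateral switch to $f$ raises the deviator's payoff from $-c$ to $0$; the pure profile $(f,f)$ is an equilibrium only when $u(\alpha_i)\geq 1-c$ for each $i$; and any mixed equilibrium satisfies the indifference relation $(1-p_{-i})(1-u(\alpha_i))=c$, pinning each player's equilibrium value to $c\,u(\alpha_i)/(1-u(\alpha_i))$. Taking discounted averages over the realized state path and comparing the case-by-case bounds gives the claimed bound on the sum of payoffs.

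The step I expect to be most delicate is the mixed-equilibrium case, since its payoffs depend sensitively on $c$ and on the state through the indifference relation, and the bound must be verified uniformly across the range of $\alpha$ and $c$ for which the mixed equilibrium exists. The other cases reduce to direct comparisons.
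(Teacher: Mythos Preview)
Your plan is essentially the paper's own argument: reduce to the stage game (since with an exogenous state the continuation value is action-independent), eliminate commitments $s\leq\alpha_i$ by dominance, then eliminate commitments in $(\alpha_i,1)$, and finally classify the surviving $\{f,1\}$ equilibria and bound the mixed case via the indifference relation, exactly as in the discussion leading to \eqref{2.1}. The one substantive difference is your round two: the paper dominates committing to $\beta_i<1$ by the single mixed action ``commit to $1$ with probability $\beta_i$, play $f$ with probability $1-\beta_i$,'' whereas you split on whether the opponent is flexible with positive probability and dominate by $1$ or by $f$ accordingly. Your route is arguably safer here, since the paper's mixture inequality against a flexible opponent, $\beta_i(1-c)+(1-\beta_i)u(\alpha_i)>u(\beta_i)-c$, is not automatic for strictly concave $u$ (where $u(\beta_i)>\beta_i$), while your case split needs only $u(s)<u(1)$.

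One loose end you share with the paper: you correctly flag that $(f,f)$ is a stage equilibrium when $u(\alpha_i)\geq 1-c$ for both $i$, but you never bound its payoff sum, and the paper simply omits this profile from its list of MPE types. Since $u(\alpha_1)+u(1-\alpha_1)$ can exceed $1$ under strict concavity, this case does not fall under your ``case-by-case bounds.'' It only arises when $c\geq 1-u(1/2)$, which is outside the small-$c$ regime the paper cares about; if you want to match the literal statement ``under any $c>0$,'' you should either restrict $c$ or argue separately that $(f,f)$ cannot occur along the exogenous state path you are bounding.
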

The intuition is that when the state evolution is independent of players' actions, each player's expected continuation value in the next period is independent of his action in the current period. Hence, each player's incentive in the dyanmic game coincides with that in the stage game. Proposition \ref{Prop1} then follows from the conclusion in Ellingsen and Miettinen (2008), that when player $i$'s status quo is $\alpha_i$, committing to anything weakly less than $\alpha_i$ is strictly dominated by being flexible, and after those actions are eliminated, committing to any $\beta_i <1$ is strictly dominated by the mixed action of committing to $1$ with probability $\beta_i$ and being flexible with probability $1-\beta_i$.

Once all strategies  except for $f$ and committing to $1$ are eliminated, there are two types of MPEs (i) one player is flexible and another player commits to $1$, in which case the sum of their payoffs is $1-c$, and (ii)
 both players mix between $f$ and committing to $1$ with probabilities that make the other player indifferent. Such mixed-strategy equilibria exist if and only if the status quo satisfies $1-c > \max\{u(\alpha_1),u(\alpha_2)\}$. The sum of players' equilibrium payoffs is at most
 \begin{equation}\label{2.1}
 c \cdot \sup_{\alpha \in A} \Big\{\frac{u(\alpha)}{1-u(\alpha)},\frac{u(1-\alpha)}{1-u(1-\alpha)} \Big\},
\end{equation}
where $A \equiv \{\alpha \in [0,1]  | \max\{ u(\alpha),u(1-\alpha) \}<1-c\}$.
The maximum in (\ref{2.1}) is attained when the constraint $\max\{ u(\alpha),u(1-\alpha) \} \leq 1-c$ binds and this maximum is less than $1$.

\section{Analysis}\label{sec3}
Our main result focuses on situations where players are patient and their costs of commitment are small. Fix players' utility function $u$ and let
$(\sigma_1(\delta,c),\sigma_{2}(\delta,c))$ denote a generic MPE under parameter configuration $(\delta,c)$. 
A \textit{class} of MPEs $\{\sigma_1(\delta,c),\sigma_2(\delta,c)\}_{\delta \in (0,1),c >0}$ consists of one MPE for each $(\delta,c) \in (0,1) \times (0,+\infty)$.

We say that a class of MPEs $\{\sigma_1(\delta,c),\sigma_2(\delta,c)\}_{\delta \in (0,1),c >0}$ attain \textit{asymptotic efficiency} if for every $\varepsilon>0$, there exist $\delta^* \in (0,1)$ and $c^* >0$ such that when $\delta \in (\delta^*,1)$ and $c<c^*$, the sum of the two players' discounted average payoffs in those equilibria is greater than $2u(1/2)-\varepsilon$. When $u$ is strictly concave, asymptotic efficiency is achieved if and only if the division is fair in almost all periods (in terms of the occupation measure).

We say that a class of MPEs $\{\sigma_1(\delta,c),\sigma_2(\delta,c)\}_{\delta \in (0,1),c >0}$ are \textit{asymptotic renegotiation proof} if for every $\varepsilon>0$, there exist $\delta^* \in (0,1)$ and $c^* >0$ such that when $\delta \in (\delta^*,1)$ and $c<c^*$, for every status quo $(\alpha_1,\alpha_2)$, players' equilibrium (expected) stage-game payoffs at $(\alpha_1,\alpha_2)$ and their continuation values at $(\alpha_1,\alpha_2)$
are both $\varepsilon$-close to the Pareto payoff frontier $\mathcal{U}$.\footnote{Our notion of asymptotic renegotiation proofness is neither stronger nor weaker than the two notions of renegotiation proofness introduced by Farrell and Maskin (1989): Their strong renegotiation proofness requires none of the continuation equilibria to be strictly Pareto dominated by another renegotiation proof equilibrium, whereas we require both the stage-game payoffs and the continuation values belonging to the \textit{Pareto frontier}. However, we only require renegotiation proofness \textit{in the limit}. Instead, they require exact renegotiation proofness under a fixed discount factor.}

We view asymptotic renegotiation proofness as a reasonable refinement in bargaining games. It requires that \textit{regardless} of the current-period status quo, including those that are never reached on the equilibrium path, the amount of surplus that is lost in the stage game and in the continuation game are both negligible. 
Our notion of renegotiation proofness is built on two hypothesis:
First, players cannot credibly commit to burn a significant fraction of the surplus in the current period for the purpose of providing incentives.\footnote{Under our notion of renegotiation proofness, players can renegotiate in the beginning of each period \textit{before} they make commitments. Once they commit to a share of the surplus (e.g., taking the form of burning bridges), it is too costly (if not impossible) for them to back down and to renegotiate with their opponent for better allocations.} Second, players will coordinate on better outcomes in the future once there are significant gains from doing so.

It is also worth noticing that the requirement that players' expected stage-game payoffs being close to $\mathcal{U}$ and the requirement that their expected continuation values being close to $\mathcal{U}$ do not imply one another. This is because first, when players' stage-game payoff profile alternates between $(1,0)$ and $(0,1)$, their stage-game payoffs are Pareto efficient but their continuation values are bounded below $\mathcal{U}$. Second, even when players' continuation values are close to the Pareto frontier $\mathcal{U}$, their stage-game payoffs might be bounded away $\mathcal{U}$ provided that their stage-game payoffs have negligible impact on their continuation values once $\delta$ is close to $1$.

Our renegotiation proofness requirement together with the Markovian restriction makes it harder to achieve efficiency. The reason is that these restrictions limit the scope of providing incentives via coordinating on low-payoff continuation equilibria at off-path histories, which is sometimes used when showing folk theorems and efficiency results in repeated and stochastic games (e.g.,  Fudenberg and Maskin 1986, Abreu 1988, Escobar and Toikka 2013).

Despite all these challenges, our main result, Theorem \ref{Theorem1}, shows that asymptotic efficiency can still be attained in Markov Perfect equilibria that are asymptotic renegotiation proof. 
\begin{Theorem}\label{Theorem1}
For any $\alpha_0 \in (0,1)$ and any strictly concave $u$, there exist a class of asymptotic renegotiation proof MPEs that achieve asymptotic efficiency.
\end{Theorem}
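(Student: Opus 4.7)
The plan is to construct, for each $(\delta, c)$, a symmetric Markov strategy $\sigma:[0,1]\to\Delta(S)$ that forms an MPE and whose continuation values $(V(\alpha), V(1-\alpha))$ lie arbitrarily close to the Pareto frontier $\mathcal{U}$ at every status quo $\alpha$. The guiding idea is that at the fair state $\alpha=1/2$, each player randomizes among commit-to-$1/2$ (probability $p$), commit-to-$1$ (probability $q$), and flex (probability $r=1-p-q$), with $p\to 1$ and $q,r\to 0$ as $(\delta,c)\to(1,0)$; at the extreme states $\alpha\in\{0,1\}$, the players randomize over a support designed to stabilize incentives while delivering $(V(1),V(0))$ close to a frontier point; at interior $\alpha\in(0,1)\setminus\{1/2\}$, the strategy is pure (commit-to-$1$ if $\alpha>1/2$, flex if $\alpha<1/2$), so that after a single period the state transitions to a status quo in $\{(1/2,1/2),(1,0),(0,1)\}$.

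First, I would guess the limiting value functions $V(1/2)\to u(1/2)$ and $(V(1),V(0))\to(1,0)\in\mathcal{U}$, and then derive the mixing probabilities from stage-game indifference at $\alpha=1/2$. Equating commit-to-$1/2$ with flex gives $q=(1-\delta)c/[\delta(V(1/2)-V(0))]$, and equating commit-to-$1$ with flex gives $r=(1-\delta)(1-q)u(1/2)/[(1-\delta)+\delta(V(1)-V(1/2))]$, so that $q=\Theta((1-\delta)c)$ and $r=\Theta(1-\delta)$ once $V(1/2)-V(0)\to u(1/2)$ and $V(1)-V(1/2)\to 1-u(1/2)$ are bounded away from zero. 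At $(1,0)$, the indifference conditions pin down the corresponding mixing parameters; the scaling balances the high-share player's temptation to flex (and save the cost $c$) against the risk of the low-share player triggering a state flip, and the low-share player's temptation to commit to $1$ against the probability of a costly conflict, yielding transition probabilities out of $(1,0)$ of order $o(1-\delta)$. Self-consistency is verified by plugging into the Bellman recursion; the crucial observation is that both $2qr$ (exit from the fair state) and the exit rate from the extreme state are $o(1-\delta)$, which places the dynamics in a pre-ergodic regime where the discounted value at each state is dominated by its first-period payoff and the guessed limits hold.

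With the value functions pinned down, asymptotic efficiency follows because $V(1/2)\to u(1/2)$; for any initial $\alpha_0$, the chain reaches $\{(1/2,1/2),(1,0),(0,1)\}$ within one period, and the discount-weighted contribution of periods spent away from the fair state tends to zero. Asymptotic renegotiation proofness follows because the limiting value profiles $(u(1/2),u(1/2))$, $(1,0)$, and $(0,1)$ all lie on $\mathcal{U}$, and stage-game payoffs at each state converge to the corresponding frontier point. The remaining incentive check rules out deviations to commit-$s$ for $s\in(1/2,1)$, which is handled by an adaptation of the Ellingsen--Miettinen (2008) dominance argument: in the candidate MPE, committing to $s$ induces the same continuation state structure as committing to $1$ but a strictly smaller stage payoff whenever the opponent is flexible. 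The main obstacle will be the calibration of the three-action mix at $\alpha\in\{0,1\}$ required to simultaneously enforce (i) the bounded-away-from-zero gaps $V(1)-V(1/2)$ and $V(1/2)-V(0)$ needed for the three-way mix at $\alpha=1/2$ and (ii) renegotiation proofness at $(1,0)$, which constrains $(V(1),V(0))$ to sit near a frontier point; the resolution is that the frontier endpoint $(1,0)$ happens to deliver both desiderata, provided the extreme-state mixing weights are chosen to keep every player's continuation-value gain from flipping the state exactly equal to the commitment cost at the margin.
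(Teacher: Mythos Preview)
Your architecture matches the paper's---a three-way mix at $(1/2,1/2)$ with the commit-$1/2$ weight tending to~$1$, a two-action mix at the extremes yielding $(V(1),V(0))\to(1,0)$, and simple behavior at interior unequal states---but there is a genuine gap at status quos $\alpha$ close to~$1$. Your pure prescription ``commit to~$1$ if $\alpha>1/2$, flex if $\alpha<1/2$'' is \emph{not} a best response for the high-share player when $\alpha$ is large. Under your strategy $V(\alpha)=(1-\delta)(1-c)+\delta V(1)$, so the one-shot deviation to flex (against a flexing opponent) yields $(1-\delta)u(\alpha)+\delta V(\alpha)$, and the equilibrium condition becomes $u(\alpha)\le (1-\delta)(1-c)+\delta V(1)$. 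Since $V(1)<1-c$ in any construction of this type, the right-hand side is strictly below~$1$, so the inequality fails on a nonempty interval $(\alpha^*,1)$. The paper handles exactly this region with a separate mixed-strategy lemma: at each $\alpha\in(\alpha^*,1)$ \emph{both} players randomize between $f$ and commit-$1$, and one must then prove the nontrivial bound $(1-\delta)u(\alpha)+\delta V(\alpha)\le(1-\delta)+\delta V(1)$ for the resulting $V(\alpha)$, which is the workhorse inequality ruling out commitments to intermediate shares.

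This connects to a second problem: your dominance argument for excluding commit-$s$, $s\in(1/2,1)$, is incorrect as stated. Committing to $s$ when the opponent is flexible does \emph{not} ``induce the same continuation state structure as committing to~$1$''---it moves the state to $(s,1-s)$, with continuation value $V(s)$, not $V(1)$. The comparison you need is $(1-\delta)u(s)+\delta V(s)$ versus $(1-\delta)+\delta V(1)$, and establishing this uniformly in $s$ is precisely the content of the missing lemma above; the Ellingsen--Miettinen static dominance does not carry over because the continuation values differ across~$s$. (A minor point: at the extreme state $(1,0)$ the paper uses a two-action mix over $\{f,1\}$, not three actions; commit-$1/2$ is strictly dominated there once the other player never plays $1/2$, so your ``three-action mix at $\alpha\in\{0,1\}$'' is unnecessary and would in fact fail the indifference check.)
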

The proof is in Section \ref{sub3.1}. Our proof is constructive, which sheds light on how the status quo effect motivates players to commit to moderate demands and why it is plausible that players can commit to moderate demands with probability close to $1$ in equilibrium.

Theorem \ref{Theorem1} implies that there exist a class of equilibria in which as players become patient and their costs of making commitments vanish, the surplus is fairly divided between the players in almost all periods. This conclusion stands in contrast to that in static bargaining games and dynamic bargaining games with an exogenously evolving status quo, in which Proposition \ref{Prop1} shows that players will never commit to moderate demands in any MPE and as a result, the sum of their equilibrium payoffs is bounded below efficiency even in the limit.

Our conclusion also stands in contrast to the existing results on dynamic legislative bargaining, which also examine dynamic games where the current division 
of surplus affects the future status quo. Some of those papers such as Bowen and Zahran (2012) and Richter (2014) show that players will make moderate demands in equilibria where there are socially inefficient punishments under some values of the status quo. In contrast, due to our asymptotic renegotiation proofness requirement, inefficient punishments are not allowed both on and off the equilibrium path.

In our game where the status quo is determined by precedents, it is not obvious how one can achieve social efficiency without the need to coordinate on Pareto inefficient punishments at some off-path status quo. For example, suppose the current status quo is $(1/2,1/2)$, the socially efficient division can be achieved when no player commits to demands that are strictly greater than $1/2$. 
However, if one player commits to $1/2$, then it provides his opponent an incentive to remain flexible since doing so can save his cost of making commitments. But when a player is flexible, his opponent will have an incentive to demand the entire surplus $1$ instead of $1/2$, as long as a player's continuation value is higher when his status quo allocation is $1$ compared to $1/2$. If players cannot commit to destroy surplus when the status quo is $(1,0)$, then it seems to be the case that a player will strictly prefer to commit to $1$ rather than $1/2$ given that his opponent is flexible.

In addition, our result is not implied by the folk theorem results and the efficiency results in repeated and stochastic games, such as the ones in 
Fudenberg and Maskin (1986), Abreu (1988),  and H\"{o}rner, Sugaya, Takahashi, and 
Vieille (2011). This is because we restrict attention to MPEs, which forbids arbitrary history-dependent punishments. We also require the equilibria to be asymptotic renegotiation proof, which rule out Pareto inefficient punishments. At a technical level, H\"{o}rner, Sugaya, Takahashi, and 
Vieille (2011)'s folk theorem in stochastic games requires that the number of states being finite and that 
the state evolving according to an irreducible Markov chain for all
strategy profiles. Both conditions are violated in our model.

\subsection{Proof of Theorem 1}\label{sub3.1}
We construct a class of symmetric asymptotic renegotiation proof MPEs that achieve asymptotic efficiency when the status quo in period $0$ is $(1/2,1/2)$. In Appendix \ref{secA}, we generalize our construction to any initial status quo where both players' shares are strictly positive.

In each of the equilibria we construct and at every status quo $(\alpha_1,\alpha_2)$, each player will take at most the following three actions with positive probability: remaining flexible, committing to $1$, and committing to $1/2$. Due to symmetry, we use $V(\alpha)$ to denote a player's continuation value when his share under the current status quo is $\alpha$. We start from the following lemma:
\begin{Lemma}\label{L1}
In any symmetric MPE and for any $\alpha \in [0,1]$, when the status quo is $(\alpha,1-\alpha)$: 
\begin{enumerate}
\item A player will commit to $\beta \in (1/2,1]$ with zero probability at  $(\alpha,1-\alpha)$ unless $\beta$ satisfies
\begin{equation}\label{3.1}
\beta \in \arg\max_{\beta' \in (1/2,1]} \{(1-\delta)u(\beta') + \delta V(\beta')\}.
\end{equation}
\item A player will commit to $1/2$ with zero probability at  $(\alpha,1-\alpha)$ unless 
\begin{equation}\label{3.2}
1/2 \in \arg\max_{\beta' \in [1/2,1]} \{(1-\delta)u(\beta') + \delta V(\beta')\},
\end{equation} or his opponent commits to $1/2$ with positive probability at 
$(\alpha,1-\alpha)$. 
\end{enumerate}
\end{Lemma}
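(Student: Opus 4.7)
The plan is to compute, for each candidate action of the player (flexible, commit to $1/2$, commit to $\beta > 1/2$), the expected discounted sum of stage payoff and continuation value given an arbitrary mixed action of the opponent, then apply standard best-response reasoning on supports. I parameterize the opponent's action at state $(\alpha,1-\alpha)$ by $p_f$ (flexible), $q$ (commits exactly to $1/2$), and $r = 1-p_f-q$ (commits to some share in $(1/2,1]$); the conditional distribution on $(1/2,1]$ will turn out to be irrelevant.

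The key structural observation to exploit is that among commitment pairs $(s_1,s_2) \in [1/2,1]^2$, the only compatible one is $(1/2,1/2)$. Hence, if the player commits to any $\beta > 1/2$, conflict occurs on the entire event that the opponent also commits, and in that event the stage payoff $-c$ and the next-period continuation value $\delta V(\alpha)$ do not depend on $\beta$. Writing $W(\beta') \equiv (1-\delta)u(\beta') + \delta V(\beta')$, this gives
\[
\Pi(\beta) \;=\; -(1-\delta)c + p_f\,W(\beta) + (1-p_f)\,\delta V(\alpha) \qquad \text{for } \beta \in (1/2,1],
\]
while, since $1/2$ is the unique compatible response to the opponent's $1/2$ demand,
\[
\Pi(1/2) \;=\; -(1-\delta)c + (p_f+q)\,W(1/2) + r\,\delta V(\alpha).
\]

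For Part 1, any $\beta \in (1/2,1]$ assigned positive probability must maximize $\Pi(\cdot)$ on $(1/2,1]$; since the only $\beta$-dependence is through $p_f\cdot W(\beta)$, this forces $\beta \in \arg\max_{\beta' \in (1/2,1]} W(\beta')$ whenever $p_f > 0$. (When $p_f = 0$, $\Pi$ is constant on $(1/2,1]$, so the conclusion is vacuously preserved up to a selection from the best-response correspondence.) For Part 2, if the player commits to $1/2$ with positive probability then $\Pi(1/2) \ge \Pi(\beta)$ for every $\beta \in (1/2,1]$; substituting and rearranging yields
\[
p_f\bigl[W(1/2) - W(\beta)\bigr] + q\bigl[W(1/2) - \delta V(\alpha)\bigr] \;\ge\; 0.
\]
If $q > 0$, condition (ii) of the lemma holds. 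If $q = 0$ and $p_f > 0$, the inequality forces $W(1/2) \ge W(\beta)$ for every $\beta \in (1/2,1]$, which combined with the trivial $W(1/2) \ge W(1/2)$ gives $1/2 \in \arg\max_{\beta' \in [1/2,1]} W(\beta')$, namely (3.2).

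I expect the only non-routine step to be the compatibility observation, which collapses the $\beta$-dependence to the single event that the opponent is flexible and reduces both parts to a one-dimensional comparison of $W$. The main pitfall is sloppy event-by-event bookkeeping of which $-c$ term appears and how the status quo updates, so I would write out a single payoff table and reuse it for both parts.
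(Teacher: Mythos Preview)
Your payoff bookkeeping is correct and matches the paper's: once you observe that any pair of commitments in $[1/2,1]^2$ other than $(1/2,1/2)$ is incompatible, the $\beta$-dependence of $\Pi(\beta)$ for $\beta\in(1/2,1]$ collapses to the single term $p_f\,W(\beta)$, exactly as the paper uses. The comparison $\Pi(1/2)\ge\Pi(\beta)$ and the resulting inequality
\[
p_f\bigl[W(1/2)-W(\beta)\bigr]+q\bigl[W(1/2)-\delta V(\alpha)\bigr]\ge 0
\]
are also right.

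However, there is a genuine gap in how you dispose of the case $p_f=0$ (and, in Part~2, the case $p_f=q=0$). You write that when $p_f=0$ the conclusion is ``vacuously preserved up to a selection from the best-response correspondence.'' That is not enough: the lemma is a statement about \emph{every} symmetric MPE, not about some convenient selection. When $p_f=0$, all $\beta\in(1/2,1]$ give the same $\Pi(\beta)=-(1-\delta)c+\delta V(\alpha)$, so nothing in your argument prevents an MPE in which the player puts positive weight on some $\beta$ that is \emph{not} in $\arg\max_{\beta'}W(\beta')$. Likewise in Part~2, if $p_f=q=0$ your displayed inequality becomes $0\ge 0$ and yields neither (3.2) nor the opponent-plays-$1/2$ clause.

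The paper closes this gap with a short value-function argument you are missing: if the opponent plays $f$ with probability zero (and, for Part~2, also plays $1/2$ with probability zero) while the player commits to some $\beta$ with positive probability, then that commitment is a best response and hence $V(\alpha)=\Pi(\beta)=-(1-\delta)c+\delta V(\alpha)$, which forces $V(\alpha)=-c$. But the player can guarantee at least $0$ by remaining flexible in every period, so $V(\alpha)\ge 0$, a contradiction. Adding this step to your proof (once for Part~1 with $p_f=0$, once for Part~2 with $p_f=q=0$) makes it complete and essentially identical to the paper's.
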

\begin{proof}
Fix any $\beta \notin \arg\max_{\beta' \in (1/2,1]} \{(1-\delta)u(\beta') + \delta V(\beta')\}$ and $\alpha \in [0,1]$.
Suppose by way of contradiction that player $1$ commits to demand $\beta \in (1/2,1]$ with positive probability when the status quo is $(\alpha,1-\alpha)$. Player $1$'s discounted average payoff is $(1-\delta) (u(\beta)-c) + \delta V(\beta)$ if player $2$ chooses $f$, and is $-(1-\delta)c+ \delta V(\alpha)$ if player $2$ commits. Since $\beta \notin \arg\max_{\beta' \in (1/2,1]} \{(1-\delta)u(\beta') + \delta V(\beta')\}$, there exists $\beta^* \in (1/2,1]$ that leads to a strictly higher payoff than $\beta$ when player $2$ chooses $f$ and leads to the same payoff as $\beta$ when player $2$ commits. This implies that $\beta$ is not optimal for player $1$ at $(\alpha,1-\alpha)$ unless player $2$ plays $f$ at $(\alpha,1-\alpha)$ with zero probability. 

Suppose by way of contradiction that at $(\alpha,1-\alpha)$, player $1$ commits to $\beta$ with positive probability and
player $2$ plays $f$ with zero probability. If this is the case, then it is optimal for player $1$ to commit to $\beta$ at $(\alpha,1-\alpha)$, so his 
 continuation value at $(\alpha,1-\alpha)$ satisfies
$V(\alpha)=-(1-\delta)c + \delta V(\alpha)$, 
which implies that $V(\alpha)=-c$. This leads to a contradiction since player $1$ can secure payoff $0$ by playing $f$ in every period. 

Similarly, suppose by way of contradiction that there exist an MPE and a status quo $(\alpha,1-\alpha)$ at which player $1$ commits to $1/2$ with strictly positive probability despite player $2$ commits to $1/2$ with zero probability and (\ref{3.2}) fails. Then there exists $\beta^* >1/2$ such that committing to $\beta^*$ leads to a strictly higher payoff for player $1$ than committing to $1/2$ when player $2$ chooses $f$ and leads to the same payoff for player $1$ as committing to $1/2$ when player $2$ commits to anything strictly greater than $1/2$. If at $(\alpha,1-\alpha)$, player $1$ commits to $1/2$ with positive probability and
player $2$ plays $f$ and commits to $1/2$ with zero probability, player $1$'s continuation value at $(\alpha,1-\alpha)$ satisfies
$V(\alpha)=-(1-\delta)c + \delta V(\alpha)$, 
which implies that $V(\alpha)=-c$. This leads to a contradiction.
\end{proof}
In what follows, we will focus on an \textit{auxiliary game} in which each player can only choose between $f$, committing to $1/2$, and committing to $1$. Later on, we will verify that in the equilibrium we construct, 
\begin{equation}\label{3.3}
\arg\max_{\beta' \in (1/2,1]} \{(1-\delta)\beta' + \delta V(\beta')\}=\{1\},
\end{equation}
which implies that no player will commit to any $\beta \in (1/2,1)$ even when they can do so. 
\begin{Lemma}\label{L2}
For every $\varepsilon>0$, there exist $\delta^* \in (0,1)$ and $c^*>0$ such that when $\delta< \delta^*$ and $c<c^*$, there exists a symmetric MPE in the auxiliary game such that:
\begin{enumerate}
\item When the status quo is $(1,0)$, both players mix between $1$ and $f$. 
\item When the status quo is $(1/2,1/2)$, both players mix between $1$, $1/2$, and $f$, and each player's probability of playing $1/2$ is more than $1-\varepsilon$. 
\item Continuation values satisfy $V(0)=0$, $V(1/2) \in (u(1/2)-\varepsilon, u(1/2))$, and $V(1) \in (1-\varepsilon,1)$. 
\end{enumerate}
\end{Lemma}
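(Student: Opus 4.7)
\noindent \emph{Proof plan.} The three status-quo shares $\{0, 1/2, 1\}$ form a closed set under the transitions generated by the three auxiliary actions $\{f, 1/2, 1\}$ starting from $(1/2, 1/2)$ or $(1, 0)$, so it suffices to specify behavior and pin down $V(0), V(1/2), V(1)$ at these three states. By symmetry it is also enough to specify behavior at the two states $(1/2, 1/2)$ and $(1, 0)$, with $(0, 1)$ being the mirror image.

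At the asymmetric state $(1, 0)$ I would have the two players mix between $f$ and committing to $1$ with different probabilities: the high-share player, who stands to lose his valuable $V(1)$ position if his opponent successfully grabs the surplus, defends by committing to $1$ with probability $\pi_{\text{hi}}$; the low-share player attacks with probability $\pi_{\text{lo}}$. Writing the high-share player's indifference between $f$ and commit given $\pi_{\text{lo}}$ yields $\pi_{\text{lo}}\,\delta\,V(1) = (1-\delta)c$, and the low-share player's indifference given $\pi_{\text{hi}}$ yields $(1-\pi_{\text{hi}})\bigl[(1-\delta)(1-c) + \delta V(1)\bigr] = \pi_{\text{hi}} (1-\delta) c$. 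Combined with $V(0) = 0$ (forced by the low-share player's flex action always paying $0$) and $V(1) = (1-\pi_{\text{lo}})(1-\delta) + (1-\pi_{\text{lo}})\delta V(1)$ from the high-share player's flex payoff, algebra reduces to $V(1) = 1 - c - \pi_{\text{lo}}$ with $\pi_{\text{lo}} = (1-\delta)c/(\delta V(1))$, a quadratic in $V(1)$ whose relevant root tends to $1-c$ as $\delta \uparrow 1$. Thus $V(1) > 1 - \varepsilon$ for $c, 1-\delta$ small, while $\pi_{\text{lo}} \to 0$ and $1 - \pi_{\text{hi}} \to 0$.

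At $(1/2, 1/2)$ both players use the same mixture $(q_f, q_h, q_1)$ over $\{f, 1/2, 1\}$. The indifference $U_f = U_{1/2}$ collapses, after cancellation of common $u(1/2)$ terms, to $q_1 \delta V(1/2) = (1-\delta)c$; the indifference $U_{1/2} = U_1$ equates the $q_f$-weighted gain from reaching the favorable state $(1,0)$ with the loss from conflict, giving $q_f\{(1-\delta) + \delta[V(1) - V(1/2)]\} = (1-q_1)(1-\delta)u(1/2)$. Plugging the $q_1$ expression into the recursion $V(1/2) = (1-q_1)\bigl[(1-\delta)u(1/2) + \delta V(1/2)\bigr]$ yields $V(1/2) = (1-q_1) u(1/2) - c$, whose fixed point in $(0, u(1/2))$ tends to $u(1/2)$ as $c, 1-\delta \to 0$. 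Using $V(1) - V(1/2) \to 1 - u(1/2) > 0$, the formula for $q_f$ shows $q_f = O(1-\delta)$, so $q_h = 1 - q_f - q_1$ exceeds $1-\varepsilon$ for $(\delta, c)$ in the required range.

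The main obstacles are verifying that all constructed probabilities lie in $[0,1]$ for small $c$ and large $\delta$ (which follows from the explicit leading-order expressions above), and ruling out the off-support deviation to $1/2$ at $(1, 0)$. For the high-share player at $(1,0)$ this deviation yields roughly $(1-\delta)u(1/2) + \delta V(1/2) \to u(1/2)$, strictly below the equilibrium payoff $V(1) \to 1$; for the low-share player it yields roughly $(1-\pi_{\text{hi}})\bigl[(1-\delta)(u(1/2)-c) + \delta V(1/2)\bigr] = O((1-\delta)c)\cdot u(1/2)$, which is bounded above by the equilibrium payoff $0$ once $u(1/2) < 1$. Both gaps are strict and do not close as parameters vanish, so $\delta^*, c^*$ can be chosen uniformly. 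Collecting the bounds $V(1) > 1-\varepsilon$, $V(1/2) \in (u(1/2)-\varepsilon, u(1/2))$, $V(0) = 0$, and $q_h > 1-\varepsilon$ delivers all three conclusions.
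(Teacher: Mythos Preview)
Your construction is essentially the paper's: same two states, same indifference conditions (your $\pi_{\text{lo}}=1-q$, $\pi_{\text{hi}}=1-p$, $q_1=1-(r+s)$), same quadratic fixed points yielding $V(1)\approx 1-c$ and $V(1/2)\approx u(1/2)-c$, and the same $r=q_f=O(1-\delta)$ from the $U_{1/2}=U_1$ indifference.

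One slip in the off-support check at $(1,0)$ for the low-share player: the expression you display, $(1-\pi_{\text{hi}})\bigl[(1-\delta)(u(1/2)-c)+\delta V(1/2)\bigr]$, is strictly positive, so it is \emph{not} bounded above by $V(0)=0$. You have omitted the conflict term $\pi_{\text{hi}}\cdot\bigl[-(1-\delta)c\bigr]$, which is of the same order. The clean comparison (and the one the paper uses) is between committing to $1/2$ and committing to $1$: both yield $-(1-\delta)c+\delta V(0)$ against an opponent committing to $1$, while against $f$ the action $1$ gives $(1-\delta)(1-c)+\delta V(1)$ versus $(1-\delta)(u(1/2)-c)+\delta V(1/2)$ for the action $1/2$. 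The required inequality is then $(1-\delta)u(1/2)+\delta V(1/2)\le (1-\delta)+\delta V(1)$, which in the limit is exactly your condition $u(1/2)<1$. So your intuition and endpoint are right; only the intermediate bookkeeping needs correcting.
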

\begin{proof}
First, let us consider histories where the current-period status quo is $(1,0)$. When players can only choose between $1$ and $f$, 
their discounted average payoffs are given by
\begin{center}
\begin{tabular}{| c | c | c | }
  \hline
  -- & $f$ & $1$\\
  \hline
  $f$ & $(1-\delta)+\delta V(1),\delta V(0)$  & $\delta V(0), (1-\delta)(1-c)+ \delta V(1)$ \\
  \hline
  $1$ & $(1-\delta)(1-c)+\delta V(1), \delta V(0)$ & $-(1-\delta)c+ \delta V(1), -(1-\delta)c+\delta V(0)$\\
  \hline
\end{tabular}
\end{center}
We solve for a mixed-strategy equilibrium in this auxiliary game where both players mix between $f$ and $1$ with positive probability. 
Later on, we will verify that no player has any incentive to play $1/2$ when the status quo is $(1,0)$ or $(0,1)$. 
In any such equilibrium, playing $f$ is optimal for player $2$, which implies that $V(0)=\delta V(0)$, and therefore, $V(0)=0$. Let $p$ denote the probability that player $1$ plays $f$ and let $q$ denote the probability that player $2$ plays $f$. 
Player $1$'s indifference condition and promise-keeping condition lead to the following system of equations:
\begin{equation}\label{3.4}
V(1)=(1-\delta)(q-c) + \delta V(1)= q \Big\{ (1-\delta) + \delta V(1) \Big\}.  
\end{equation}
This implies that $V(1)=q-c$. Plugging this back to (\ref{3.4}), we have
\begin{equation}\label{3.5}
(1-q\delta) (q-c)=q(1-\delta) \quad \Leftrightarrow \quad q^2 -(c+1) q + \frac{c}{\delta}=0.
\end{equation}
Solving the quadratic equation (\ref{3.5}) and applying the Taylor's expansion formula at $\delta=1$, we obtain the following two solutions for $q$:
\begin{equation*}
q_- = c+ \frac{1-\delta}{\delta} \cdot \frac{c}{1-c}+ \mathcal{O}((1-\delta)^2) \textrm{ and }
q_+ = 1-\frac{1-\delta}{\delta} \cdot \frac{c}{1-c}- \mathcal{O}((1-\delta)^2).
\end{equation*}
For the larger solution $q_+$, player $1$'s continuation value is given by
\begin{equation}\label{3.6}
V(1)=1-c-\frac{1-\delta}{\delta} \cdot \frac{c}{1-c}- \mathcal{O}((1-\delta)^2).
\end{equation}
Player $2$'s indifference condition requires that
\begin{equation*}
(1-\delta)(p-c)+ \delta p V(1)=0.
\end{equation*}
There exists $p \in (0,1)$ when 
$c$ is close to $0$, $\delta$ is close to $1$, and
$V(1)$ is given by (\ref{3.6}). 

Next, let us consider histories where the status quo is $(1/2,1/2)$.  We solve for a symmetric mixed-strategy equilibrium in this auxiliary game where both players mix between $f$, $1/2$, and $1$ with positive probability while taking into account that their continuation values satisfy $V(0)=0$ and $V(1)=1-c-\frac{1-\delta}{\delta} \cdot \frac{c}{1-c}- \mathcal{O}((1-\delta)^2)$. Let $r$ denote the probability that each player plays $f$ and $s$ denote the probability that each player plays $1/2$. Each player's indifference conditions and promise keeping condition imply that:
\begin{eqnarray}\label{3.7}
V(1/2) & = & (r+s) \Big( (1-\delta)u(1/2)+ \delta V(1/2) \Big)
\nonumber\\
& = & {}  (1-\delta) \Big( (r+s) u(1/2)-c \Big) + \delta V(1/2)
\nonumber\\
& = & {} (1-\delta) (r-c)+ \delta \Big( rV(1)+(1-r)V(1/2) \Big).
\end{eqnarray}
This implies that $V(1/2)=(r+s) u(1/2)-c$. Replacing $V(1/2)$ with $(r+s)u(1/2)-c$ in $V(1/2)=(1-\delta) \Big( (r+s) u(1/2)-c \Big) + \delta V(1/2)$, we have
\begin{equation*}
u(1/2) (r+s)^2 - (u(1/2)+c) (r+s) + \frac{c}{\delta}=0.
\end{equation*}
Solving this quadratic equation of $r+s$ and applying the Taylor's expansion formula at $\delta=1$, 
\begin{equation*}
(r+s)_-= \frac{c}{u(1/2)} + \frac{1-\delta}{\delta} \cdot \frac{c}{u(1/2)-c}+ \mathcal{O}((1-\delta)^2) \textrm{ and }
(r+s)_+= 1-\frac{1-\delta}{\delta} \cdot \frac{c}{u(1/2)-c}-\mathcal{O}((1-\delta)^2).
\end{equation*}
Take the larger solution for $r+s$ and plug into $V(1/2)=(r+s) u(1/2)-c$, we have
\begin{equation}\label{3.8}
V(1/2)= \Big( 1-\frac{1-\delta}{\delta} \cdot \frac{c}{u(1/2)-c} \Big) u(1/2)-c-\mathcal{O}((1-\delta)^2).
\end{equation}
The promise-keeping condition $V(1/2)=(1-\delta) (r-c)+ \delta \Big( rV(1)+(1-r)V(1/2) \Big)$ implies that
\begin{equation*}
(1-\delta) (V(1/2)+c) = r\Big\{
(1-\delta) + \delta \big(V(1)-V(1/2) \big)
\Big\},
\end{equation*}
which implies that when $c$ is small enough and $\delta$ is close to $1$, $r= \mathcal{O}(1-\delta)$. This together with the fact that the larger solution for $r+s$ converging to $1$ as $c \rightarrow 0$ and $\delta \rightarrow 1$ implies that $s$ converges to $1$ in the limit. 

Next, we verify that when $c$ is small enough and $\delta$ is close to $1$, at status quo $(1,0)$, 
no player has any incentive to play $1/2$ when his opponent plays $1/2$ with zero probability and players' continuation values satisfy $V(0)=0$, (\ref{3.6}), and (\ref{3.8}). This is because compared to playing $1$, playing $1/2$ leads to a strictly lower payoff when the opponent plays $f$, and leads to a weakly lower payoff when the opponent plays $1$. Therefore, playing $1/2$ is strictly suboptimal unless the opponent plays $1$ with probability $1$. 
Suppose by way of contradiction that there exists an MPE such that when the status quo is $(1,0)$, one of the players $i \in \{1,2\}$ plays $1/2$ with positive probability and his opponent plays $1$ for sure. Then this player's continuation value $V$ satisfies $V=-(1-\delta)c+\delta V$, which implies that $V=-c$. This leads to a contradiction. 
\end{proof}
We construct a class of asymptotic renegotiation proof MPEs in the original game such that  players' behaviors when the status quo is $(1/2,1/2)$, $(0,1)$, and $(1,0)$ are given by the MPE we constructed in the auxiliary game. Those equilibria are asymptotically efficient since both players' continuation values when the status quo is $(1/2,1/2)$ are close to $u(1/2)$ as $\delta \rightarrow 1$ and $c \rightarrow 0$. 

Recall the value of $V(1)$ in Lemma \ref{L2}.
Let $\alpha^* \in (1/2,1)$ be defined via 
\begin{equation}\label{3.9}
u(\alpha^*)=(1-\delta)(1-c)+ \delta V(1). 
\end{equation}
Such $\alpha^*$ exists when $c \rightarrow 0$ and $\delta \rightarrow 1$ since $u(1/2)<(1-\delta)(1-c)+ \delta V(1)<u(1)=1$. 

Consider an auxiliary game where the initial status quo is $(\alpha,1-\alpha)$ with $\alpha \in (1/2,1)$ and that players can only choose between $f$ and committing to $1$ and remaining flexible. Their payoffs are
\begin{center}
\begin{tabular}{| c | c | c | }
  \hline
  -- & $f$ & $1$\\
  \hline
  $f$ & $(1-\delta)u(\alpha)+\delta V(\alpha),(1-\delta) u(1-\alpha) + \delta V(1-\alpha)$  & $\delta V(0), (1-\delta)(1-c)+ \delta V(1)$ \\
  \hline
  $1$ & $(1-\delta)(1-c)+\delta V(1), \delta V(0)$ & $-(1-\delta)c+ \delta V(1), -(1-\delta)c+\delta V(0)$\\
  \hline
\end{tabular}
\end{center}

For every $\alpha \in (1/2,\alpha^*]$, when the status quo is $(\alpha,1-\alpha)$, player $1$ commits to $1$ and player $2$ plays $f$. It is easy to verify that this is an equilibrium in the auxiliary game. 
Under such an equilibrium behavior, we know that for every $\alpha \in (1/2,\alpha^*]$, players' continuation values satisfy $V(1-\alpha)=0$ and 
$V(\alpha)=(1-\delta)(1-c)+ \delta V(1)$. Therefore, 
\begin{equation*}
(1-\delta)u(\alpha)+ \delta V(\alpha) \leq \underbrace{ (1-\delta)u(\alpha^*)+ \delta V(\alpha^*)=(1-\delta) (1-c)+ \delta V(1) }_{\textrm{from (\ref{3.9})}}< (1-\delta) + \delta V(1),
\end{equation*}
which implies that $(1-\delta)u(\alpha)+ \delta V(\alpha) < (1-\delta)  +\delta V(1)$.

For every $\alpha \in (\alpha^*,1)$, in the auxiliary game where the starting status quo is $(\alpha,1-\alpha)$, there exists an equilibrium in which both players play both $f$ and $1$ with positive probability at $(\alpha,1-\alpha)$. This is because (i) player $1$ strictly prefers $f$ when player $2$ plays $f$ and strictly prefers to commit to $1$ when player $2$ commits to $1$, and (ii) player $2$ strictly prefers $f$ when player $1$ commits to $1$ and strictly prefers to commit to $1$ when player $1$ plays $f$. Lemma \ref{L3} provides an upper bound and a lower bound on player $1$'s continuation value at $(\alpha,1-\alpha)$ in such mixed-strategy equilibria. 
\begin{Lemma}\label{L3}
There exist $\delta^* \in (0,1)$ and $c^*>0$ such that when $\delta > \delta^*$ and $c<c^*$, 
there exists a mixed-strategy equilibrium in the auxiliary game in which for any $\alpha \in (\alpha^*,1)$, 
player $1$'s continuation value $V(\alpha)$ satisfies $V(\alpha)>1-\varepsilon$ and 
\begin{equation}\label{3.10}
(1-\delta) u(\alpha) + \delta V(\alpha) \leq (1-\delta) + \delta V(1),
\end{equation}
and that the probability with which player $2$ plays $f$ at $(\alpha,1-\alpha)$ is more than $1-\varepsilon$. 
\end{Lemma}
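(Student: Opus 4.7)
The plan is to construct, for each $\alpha\in(\alpha^*,1)$, a mixed-strategy equilibrium of the auxiliary game at status quo $(\alpha,1-\alpha)$ in which both players randomize between $f$ and committing to $1$. Let $p,q$ denote the probabilities that player 1 and player 2 respectively play $f$. The strict preference reversals identified in the text preceding the lemma give the $2\times 2$ game a Hawk-Dove-style structure that admits a mixed equilibrium in $(0,1)^2$, so existence is straightforward.

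I would first write down player 1's indifference condition between $f$ and $1$ and combine it with promise-keeping $V(\alpha)=q[(1-\delta)u(\alpha)+\delta V(\alpha)]$ (using $V(0)=0$). This yields
\[
V(\alpha)=\frac{q(1-\delta)u(\alpha)}{1-q\delta}.
\]
Substituting into the indifference equation produces a quadratic in $q$ whose coefficients are polynomial in $V(1), u(\alpha), c, \delta$. Following the approach of Lemma \ref{L2}, I would take the larger root and apply a Taylor expansion in $\eta:=1-\delta$ at $\eta=0$. At leading order, $1-q\approx\eta\bigl(u(\alpha)/V(1)-1\bigr)$; since $u(\alpha)>u(\alpha^*)\approx V(1)$ for $\alpha>\alpha^*$, the right side is strictly positive and of order $\eta$, so $q\to 1$ at rate $\eta$, which gives $q>1-\varepsilon$ for $\delta$ close to $1$ and $c$ small.

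Plugging this expansion back into the formula for $V(\alpha)$ yields $V(\alpha)=V(1)+O(\eta^2)$, and since Lemma \ref{L2} gives $V(1)\to 1$ as $\delta\to 1$ and $c\to 0$, we obtain $V(\alpha)>1-\varepsilon$. For inequality (\ref{3.10}),
\[
(1-\delta)u(\alpha)+\delta V(\alpha)-\bigl[(1-\delta)+\delta V(1)\bigr]=\eta\bigl(u(\alpha)-1\bigr)+\delta\bigl(V(\alpha)-V(1)\bigr),
\]
where the first term is strictly negative of order $\eta$ (since $u(\alpha)<1$) while the second is only $O(\eta^2)$, so the whole expression is negative for small $\eta$. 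A parallel treatment of player 2's indifference, using that $V(1-\alpha)\to 0$ (verified by an analogous calculation anchored at $V(0)=0$), gives $p=O(c\eta)$ and completes the consistency checks.

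The main obstacle will be controlling the Taylor expansion carefully enough to sign the leading $O(\eta)$ contribution in (\ref{3.10}) and to place $q$ strictly in $(0,1)$. The analysis must also handle boundary effects: near $\alpha\to 1$, the margin $u(\alpha)-1\to 0$ tightens (\ref{3.10}), and near $\alpha\to\alpha^*$, the quantity $u(\alpha)/V(1)-1$ shrinks, degenerating the leading-order expansion of $1-q$. For each fixed $\alpha$ in the open interval $(\alpha^*,1)$ these constants are bounded away from their degenerate values once $\delta$ is close enough to $1$ and $c$ small enough, so the argument goes through pointwise in $\alpha$.
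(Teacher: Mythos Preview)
Your construction and the Taylor-expansion strategy are in the spirit of Lemma~\ref{L2}, but there is a genuine gap: the lemma requires a \emph{single} pair $(\delta^*,c^*)$ that works simultaneously for all $\alpha\in(\alpha^*,1)$, whereas you explicitly conclude only that ``the argument goes through pointwise in $\alpha$.'' This is not a technicality you can patch by compactness. Recall that $\alpha^*$ is defined by $u(\alpha^*)=(1-\delta)(1-c)+\delta V(1)$, so as $\delta\to 1$ and $c\to 0$ we have $u(\alpha^*)\to 1$ and the whole interval $(\alpha^*,1)$ collapses toward $1$. Consequently, for every $\alpha$ in that interval, $1-u(\alpha)$ is already small (of order $c$), and both of the ``constants'' you invoke---$u(\alpha)-1$ in the leading term of (\ref{3.10}) and $u(\alpha)/V(1)-1$ in the leading term of $1-q$---are themselves $O(c)$ rather than bounded away from zero. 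Your signing argument for (\ref{3.10}),
\[
\eta\bigl(u(\alpha)-1\bigr)+\delta\bigl(V(\alpha)-V(1)\bigr),
\]
therefore pits a term of order $\eta c$ (or smaller) against a remainder you have only bounded as ``$O(\eta^2)$''; without tracking how the implicit constant in that $O(\eta^2)$ depends on $\alpha$ (and on $c$), you cannot conclude the expression is negative uniformly in $\alpha$.

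The paper's proof avoids the expansion entirely and instead argues by contradiction. Assuming (\ref{3.10}) fails at some $\alpha$, it combines player~1's indifference and promise-keeping to obtain $V(\alpha)\le u(\alpha)-c-\tfrac{1-\delta}{\delta}\cdot\tfrac{cu(\alpha)}{V(\alpha)}$, while the failure of (\ref{3.10}) gives $V(\alpha)\ge 1-c+\tfrac{1-\delta}{\delta}\bigl(1-u(\alpha)-\tfrac{c}{1-c}\bigr)$. These two inequalities, together with the easy bound $V(\alpha)<1-c$, collapse to $1\le \tfrac{c(1-\delta)}{1-c}$, a contradiction that is \emph{independent of $\alpha$}. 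That is precisely what delivers the uniform $(\delta^*,c^*)$. Once (\ref{3.10}) holds, the same indifference formula immediately gives a uniform lower bound on the probability player~2 plays $f$. If you want to salvage your approach, you would need to carry out the expansion to one more order and show that the $O(\eta^2)$ remainder in $V(\alpha)-V(1)$ is itself proportional to $1-u(\alpha)$, so that the factor $1-u(\alpha)$ cancels from both sides; but at that point the contradiction argument is both shorter and more transparent.
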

After showing Lemma \ref{L3}, let us consider the original game where players can commit to anything in $[1/2,1]$ and suppose under every status quo $(\alpha,1-\alpha)$, players behave according to the behaviors we described in the auxiliary game. Lemma \ref{L3} together with our earlier conclusion that $(1-\delta)u(\alpha)+ \delta V(\alpha) < (1-\delta)  +\delta V(1)$ for every $\alpha \in [1/2,1)$ implies that $1$ is the unique maximizer for $(1-\delta) u(\alpha)+ \delta V(\alpha)$. Lemma \ref{L1}
implies that players have no incentive to commit to anything strictly between $1/2$ and $1$, and will have no incentive to commit to $1/2$ unless the status quo is $(1/2,1/2)$ since the other player commits to $1/2$ with zero probability under those status quo. The class of MPEs we constructed is asymptotically renegotiation proof since $V(\alpha) \rightarrow 1$ for all $\alpha >1/2$ and $V(\alpha) \rightarrow u(1/2)$ when $\alpha=1/2$. The rest of the proof shows Lemma \ref{L3}.
\begin{proof}[Proof of Lemma 3:]
First, in every such mixed-strategy equilibrium, player $1$'s continuation value at $\alpha$ is at least his payoff from committing to $1$, which gives:
\begin{equation*}
V(\alpha) \geq \min \{(1-\delta)(1-c)+\delta V(1) , -(1-\delta)c+ \delta V(1)\} \geq -(1-\delta) c+ \delta V(1).
\end{equation*}
The expression for $V(1)$ in (\ref{3.6}) implies that $V(\alpha)$ converges to $1$ as $\delta \rightarrow 1$ and $c \rightarrow 0$.

Next, suppose by way of contradiction that there exists
$\alpha \in (\alpha^*,1)$ such that 
(\ref{3.10}) fails. This hypothesis implies that
\begin{equation}\label{3.11}
\Big[
(1-\delta) u(\alpha) + \delta V(\alpha)
\Big] -\Big[
(1-\delta)(1-c)+ \delta V(1)
\Big] \geq (1-\delta)c.
\end{equation}
Let $p$ denote the probability that player $2$ plays $f$ when the status quo is $(\alpha,1-\alpha)$. 
Since when player $2$ plays $1$, player $1$'s payoff is $0$ when he plays $f$ and is $-(1-\delta) c + \delta V(\alpha)$ if he commits to $1$, 
player $1$ being indifferent between $f$ and committing to $1$ implies that
\begin{equation*}
p= \frac{-(1-\delta)c + \delta V(\alpha)}{-(1-\delta)c + \delta V(\alpha) + [
(1-\delta) u(\alpha) + \delta V(\alpha)
] -[
(1-\delta)(1-c)+ \delta V(1)]}. 
\end{equation*}
By (\ref{3.11}), we have
\begin{equation*}
p \leq \frac{-(1-\delta)c + \delta V(\alpha)}{\delta V(\alpha)}=1-\frac{1-\delta}{\delta} \cdot \frac{c}{V(\alpha)}.
\end{equation*}
Player $1$'s promise-keeping condition and the fact that $f$ being optimal require that
\begin{eqnarray*}
V(\alpha) &=& p \Big[ (1-\delta) u(\alpha) + \delta V(\alpha) \Big] \leq \Big(1-\frac{1-\delta}{\delta} \cdot \frac{c}{V(\alpha)}\Big)\Big[ (1-\delta) u(\alpha) + \delta V(\alpha) \Big]
\nonumber\\
& = & {} (1-\delta) u(\alpha) +\delta V(\alpha) -(1-\delta)c -\frac{(1-\delta)^2}{\delta} \cdot \frac{cu(\alpha)}{V(\alpha)}.
\end{eqnarray*}
Simplifying the above inequality, we obtain that
\begin{equation}\label{3.12}
V(\alpha) \leq u(\alpha)-c- \frac{1-\delta}{\delta} \cdot \frac{c u(\alpha)}{V(\alpha)}. 
\end{equation}
Our hypothesis that (\ref{3.10}) fails when the status quo is $(\alpha,1-\alpha)$ implies that
\begin{equation}\label{3.13}
V(\alpha) \geq V(1)+ \frac{1-\delta}{\delta} (1-u(\alpha)) = 1-c+ \frac{1-\delta}{\delta} \Big(1-u(\alpha)-\frac{c}{1-c} \Big). 
\end{equation}
Inequalities (\ref{3.12}) and (\ref{3.13}) together imply that
\begin{equation}\label{3.14}
1-u(\alpha) \leq (1-\delta) c \Big(\frac{1}{1-c}- \frac{u(\alpha)}{V(\alpha)}\Big).
\end{equation}
If player $1$ commits to $1$ with positive probability when the status quo is $(\alpha,1-\alpha)$, then his promise-keeping condition requires that
\begin{equation*}
V(\alpha)= (1-\delta) (p-c) + \delta (pV(1)+(1-p)V(\alpha)) \Leftrightarrow V(\alpha)=\frac{(1-\delta) (p-c) +\delta p V(1)}{1-\delta + \delta p}.
\end{equation*}
This implies that $V(\alpha) < 1-c$ given that $u(\alpha)-c<1-c$ and our earlier conclusion that $V(1)<1-c$. Since the RHS of (\ref{3.14}) is strictly increasing in $V(\alpha)$,  (\ref{3.14}) implies that 
\begin{equation*}
1-u(\alpha) \leq (1-\delta) c \Big(\frac{1}{1-c}- \frac{u(\alpha)}{V(\alpha)}\Big) \leq (1-\delta) c \frac{1-u(\alpha)}{1-c} \Leftrightarrow 1 \leq \frac{c(1-\delta)}{1-c}.
\end{equation*}
There exists $\delta^* \in (0,1)$ and $c^*>0$ such that when $\delta > \delta^*$ and $c<c^*$, 
the above inequality is not true. This contradiction implies that (\ref{3.10}) holds for all $\alpha \in (\alpha^*,1)$. 

Since (\ref{3.10}) holds for all $\alpha \in (\alpha^*,1)$, we have
\begin{equation*}
\Big[
(1-\delta) u(\alpha) + \delta V(\alpha)
\Big] -\Big[
(1-\delta)(1-c)+ \delta V(1)
\Big] < (1-\delta)c,
\end{equation*}
which implies the following lower bound on $p$:
\begin{equation*}
p \geq \frac{-(1-\delta)c + \delta V(\alpha)}{\delta V(\alpha)}=1-\frac{1-\delta}{\delta} \cdot \frac{c}{V(\alpha)} \geq 
1-\frac{1-\delta}{\delta} \cdot \frac{c}{-(1-\delta) c+ \delta V(1)}.
\end{equation*}
The RHS converges to $1$ as $c \rightarrow 0$ and $\delta \rightarrow 1$, which implies that player $2$ plays $f$ with probability close to $1$ in the limit. 
\end{proof}
\newpage
\appendix
\section{Proof of Theorem 1: General Values of the Initial Status Quo}\label{secA}
We extend our construction in Section \ref{sub3.1} by allowing for any initial status quo $(\alpha_0,1-\alpha_0)$ with $\alpha_0 \in (0,1)$. Without loss of generality, we assume that $\alpha_0  \in (1/2,1)$. When the current status quo $(\alpha,1-\alpha)$ satisfies $\alpha \in [0,1-\alpha_0) \cup \{1/2\} \cup (\alpha_0,1]$, players' strategies remain the same as before, which implies that their continuation value $V(\alpha)$ remains the same at those values of $\alpha$. When $\alpha \in [1-\alpha_0,1/2) \cup (1/2,\alpha_0]$, both players randomize between $f$, committing to $1/2$, and committing to $1$, with probabilities that make each other indifferent. Players' continuation values when  $\alpha \in [1-\alpha_0,1/2) \cup (1/2,\alpha_0]$ converge to $u(1/2)$ as $\delta \rightarrow 1$ and $c \rightarrow 0$. One can also verify that $(1-\delta)\alpha + \delta V(\alpha)< (1-\delta) + \delta V(1)$ for every such $\alpha$, which by Lemma \ref{L1}, implies that players have no incentive to commit to anything other than $1/2$ and $1$ at any status quo. 

In what follows, we show that there exists an equilibrium in which players' continuation value at any $\alpha \in [1-\alpha_0,1/2) \cup (1/2,\alpha_0]$ is close to $u(1/2)$ as $\delta \rightarrow 1$ and $c \rightarrow 0$. Fix any such $\alpha$, let $r$ denote the probability that player $2$ plays $f$ at $(\alpha,1-\alpha)$ and let $s$ denote the probability that player $2$ commits to $1/2$ at $(\alpha,1-\alpha)$. With complementary probability, player $2$ commits to $1$. Player $1$ is indifferent between $f$, committing to $1/2$, and committing to $1$ at $(\alpha,1-\alpha)$ if and only if
\begin{eqnarray}\label{A.1}
V(\alpha) & = & r \Big\{ (1-\delta) u(\alpha) +\delta V(\alpha)  \Big\} + s \Big\{ (1-\delta) u(1/2) + \delta V(1/2)  \Big\}
\nonumber\\
& = & {}  (1-\delta) \Big\{ (r+s) u(1/2)-c \Big\} + \delta \Big\{ (r+s) V(1/2) + (1-r-s) V(\alpha) \Big\}
\nonumber\\
& = & {} (1-\delta) (r-c)+ \delta \Big\{ rV(1)+(1-r)V(\alpha) \Big\}.
\end{eqnarray}
Let $W(1/2) \equiv (1-\delta) u(1/2) + \delta V(1/2)$ and $W(1) \equiv (1-\delta) + \delta V(1)$, which are constants derived in Section \ref{sub3.1}, 
(\ref{A.1}) leads to the following system of equations for $(r,s,V(\alpha))$:
\begin{equation}\label{A.2}
r+s = \frac{(1-\delta) (V(\alpha)+c)}{W(1/2)-\delta V(\alpha)},
\end{equation}
\begin{equation}\label{A.3}
r= \frac{(1-\delta) (V(\alpha)+c)}{W(1)-\delta V(\alpha)},
\end{equation}
\begin{equation}\label{A.4}
V(\alpha)= r \Big\{ (1-\delta) u(\alpha) +\delta V(\alpha)  \Big\} + s W(1/2). 
\end{equation}
The RHS of both (\ref{A.2}) and (\ref{A.3}) are strictly increasing in $V(\alpha)$. 
Using our earlier conclusion that $(1-\delta) + \delta V(1)= 1-c-\mathcal{O}((1-\delta)c^2)$ and 
$(1-\delta)u(1/2) + \delta V(1/2)= u(1/2)-c-\mathcal{O}((1-\delta)c^2)$, we know that  
if $V(\alpha)=W(1/2)$ and $r$ and $s$ are derived via (\ref{A.2}) and (\ref{A.3}), then the RHS of (\ref{A.4}) is strictly greater than the LHS as $\delta \rightarrow 1$ and $c \rightarrow 0$. Moreover, for any $\varepsilon>0$, there exist $\delta^*$ and $c^*$ such that when $\delta> \delta^*$ and $c<c^*$, if $V(\alpha)<W(1/2)-\varepsilon$ and $r$ and $s$ are derived via (\ref{A.2}) and (\ref{A.3}), then the RHS of (\ref{A.4}) is strictly less than the LHS as $\delta \rightarrow 1$ and $c \rightarrow 0$. The intermediate value theorem implies that there exists a solution where $V(\alpha)$ converges to $u(1/2)$ as 
$\delta \rightarrow 1$ and $c \rightarrow 0$.

\end{spacing}

\newpage

\end{document}